\pgfplotsset{compat=1.16}
\pgfplotsset{
    tick label style={font=\sffamily\footnotesize},
    every axis label={font=\sffamily\small},
    legend style={font=\sffamily\small},
    label style={font=\sffamily\small}
}
\definecolor{vqcolor}{RGB}{200, 120, 20}
\definecolor{uqcolor}{RGB}{49, 130, 189}
\newtheorem{theorem}{Theorem}
\title{On the advantages of stochastic encoders}
\author{Lucas Theis\thanks{Equal contribution} \\
Google Research \\
\texttt{theis@google.com}
\And 
Eirikur Agustsson\footnotemark[1] \\
Google Research \\
\texttt{eirikur@google.com}}
\newcommand{\x}{{\mathbf{x}}}
\newcommand{\hatx}{{\mathbf{\hat x}}}
\newcommand{\X}{{\mathbf{X}}}
\newcommand{\hatX}{{\mathbf{\hat X}}}
\newcommand{\cb}{{\mathbf{c}}}
\begin{document}

\maketitle

\begin{abstract}
Stochastic encoders have been used in rate-distortion theory and neural compression because they can be easier to handle. However, in performance comparisons with deterministic encoders they often do worse, suggesting that noise in the encoding process may generally be a bad idea. It is poorly understood if and when stochastic encoders can do better than deterministic encoders. In this paper we provide one illustrative example which shows that stochastic encoders can significantly outperform the best deterministic encoders. Our toy example suggests that stochastic encoders may be particularly useful in the regime of ``perfect perceptual quality''.
\end{abstract}

\section{Introduction}
In lossy compression, information is typically deterministically encoded through the following procedure. First, the input is transformed using a fixed analysis transform or a learned neural network. This is followed by a quantization step mapping continuous to discrete values. Finally, the remaining information is further compressed and encoded into bits using entropy encoding. Taken together, these steps implement a deterministic function from the input to the integers.
In this paper we discuss whether it is ever advantageous to stochastically encode data by introducing noise into this process.

To make this discussion more precise, let
\begin{align}
    \mathcal{F} = \{ f: \mathbb{R}^n \times \mathbb{R} \rightarrow \mathbb{N}_0 \}, \quad
    \mathcal{G} = \{ g: \mathbb{N}_0 \times \mathbb{R} \rightarrow \mathbb{R}^n \}
\end{align}
be sets of \textit{stochastic encoders} and \textit{decoders}. The encoder expects $n$-dimensional data as well as an additional source of randomness as input. This source of randomness will be represented by a random variable distributed uniformly between zero and one, which we can think of as an infinite number of random bits when expressed in binary form ($U = 0.B_1B_2B_3...$). The decoder in turn expects the output of an encoder and a source of randomness as input. This could either be the same random input as for the encoder or a different source (Fig.~\ref{fig:graphs}). By deterministic encoders we mean
\begin{align}
    \mathcal{\tilde F} = \{ f \in \mathcal{F} \mid \exists \tilde f: \forall \x, u: f(\x, u) = \tilde f(\x) \},
\end{align} 
that is, encoders which ignore the random inputs. Defined in this way, it is clear that stochastic encoders can perform no worse than deterministic encoders since $\smash{\mathcal{\tilde F} \subset \mathcal{F}}$. However, for a given loss and data distribution we can ask whether an optimal encoder can be found in $\smash{\mathcal{\tilde F}}$ or whether we should hope to do better by considering the larger set $\mathcal{F}$.

Rate-distortion theory considers stochastic encoders because they can be more amenable to a theoretical analysis. For example, \textit{universal quantization} is a randomized quantization procedure whose error distribution is independent of the source \citep{ziv1985universal}, making it easier to analyze than deterministic quantization. In machine learning, stochastic encoders recently have been considered because they allow for the transmission of continuous signals and thus enable differentiable losses \citep{havasi2018miracle}. While this can make it easier to optimize stochastic encoders, the introduction of noise can also hurt performance \citep{agustsson2020uq}.

For the Laplace and exponential distributions, optimal quantizers exist which are deterministic \citep{sullivan1996laplace}. Universal quantization is known to introduce redundancy compared to the best deterministic quantizer at the same squared error distortion level \citep{ziv1985universal}. More generally we can say that the Lagrangian of a rate-distortion problem is always minimized by a deterministic encoder (Appendix~A). This begs the question whether stochastic encoders are ever better than deterministic encoders. \cite{wagner2021sawbridge} provide one example where deterministic encoders are only optimal at certain levels of distortion. For levels in between, a better rate can be achieved by randomly choosing one of two deterministic codecs, proving that stochastic encoders are indeed sometimes needed for optimal performance. Here we consider another setting where stochastic encoders appear to provide further advantages.

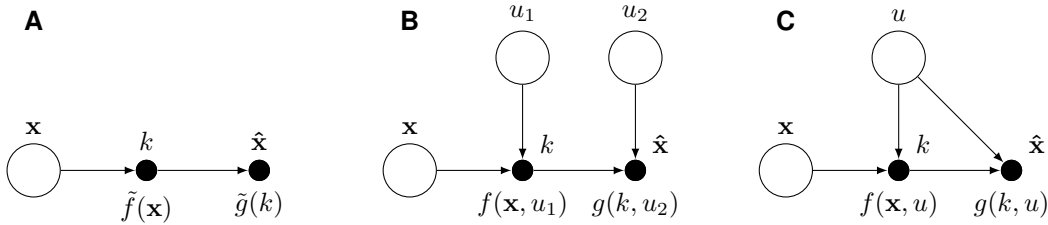
\begin{figure}
    \centering
    \begin{tikzpicture}
        \node at (0, 2) {\textbf{\textsf{A}}};
        \begin{scope}
            \node[fill, circle, inner sep=0.1cm, label=90:{$\hatx$}] (r) at (3, 0) {};
            \node[fill, circle, inner sep=0.1cm, label=270:{$\tilde g(k)$}] (r) at (3, 0) {};
            \node[fill, circle, inner sep=0.1cm, label=90:{$k$}] (z) at (1.5, 0) {};
            \node[fill, circle, inner sep=0.1cm, label=270:{$\tilde f(\x)$}] (z) at (1.5, 0) {};
            \node[draw, circle, inner sep=0.25cm, label=90:{$\mathbf{x}$}] (x) at (0.0, 0) {};
            \draw[-latex] (x) -- (z);
            \draw[-latex] (z) -- (r);
        \end{scope}
        
        \node at (5, 2) {\textbf{\textsf{B}}};
        \begin{scope}[xshift=5cm]
            \node[fill, circle, inner sep=0.1cm, label=45:{$\hatx$}] (r) at (3, 0) {};
            \node[fill, circle, inner sep=0.1cm, label=270:{$g(k, u_2)$}] at (3, 0) {};
            \node[fill, circle, inner sep=0.1cm, label=270:{$f(\x, u_1)$}] (z) at (1.5, 0) {};
            \node[fill, circle, inner sep=0.1cm, label=45:{$k$}] at (1.5, 0) {};
            \node[draw, circle, inner sep=0.25cm, label=90:{$\x$}] (x) at (0.0, 0) {};
            \node[draw, circle, inner sep=0.25cm, label=90:{$u_1$}] (u1) at (1.5, 1.5) {};
            \node[draw, circle, inner sep=0.25cm, label=90:{$u_2$}] (u2) at (3, 1.5) {};
            \draw[-latex] (x) -- (z);
            \draw[-latex] (z) -- (r);
            \draw[-latex] (u1) -- (z);
            \draw[-latex] (u2) -- (r);
        \end{scope}
        
        \node at (10, 2) {\textbf{\textsf{C}}};
        \begin{scope}[xshift=10cm]
            \node[fill, circle, inner sep=0.1cm, label=270:{$g(k, u)$}] (r) at (3, 0) {};
            \node[fill, circle, inner sep=0.1cm, label=45:{$\hatx$}] at (3, 0) {};
            \node[fill, circle, inner sep=0.1cm, label=270:{$f(\x, u)$}] (z) at (1.5, 0) {};
            \node[fill, circle, inner sep=0.1cm, label=45:{$k$}] at (1.5, 0) {};
            \node[draw, circle, inner sep=0.25cm, label=90:{$\x$}] (x) at (0.0, 0) {};
            \node[draw, circle, inner sep=0.25cm, label=90:{$u$}] (u) at (1.5, 1.5) {};
            \draw[-latex] (x) -- (z);
            \draw[-latex] (z) -- (r);
            \draw[-latex] (u) -- (z);
            \draw[-latex] (u) -- (r);
        \end{scope}
    \end{tikzpicture}
    \caption{Visualizations of deterministic and stochastic encoders and decoders. Empty circles represent sources of randomness, filled circles represent variables which depend deterministically on their inputs. \textbf{A}:~A deterministic encoder and a deterministic decoder. \textbf{B}:~A stochastic encoder and a stochastic decoder with independent sources of randomness. \textbf{C}:~A stochastic encoder and a stochastic decoder with a shared
    source of randomness. Setting B generalizes A and C generalizes B. (A single uniform random variable can be split into two independent variables. For example, the encoder can use the odd numbered bits and the decoder the even numbered bits.)}
    \label{fig:graphs}
\end{figure}

\section{Perfect perceptual quality}
The setting we consider here is the setting of ``perfect perceptual quality'' \citep{blau2018tradeoff}.
We still wish to optimize a rate-distortion trade-off but under the constraint that reconstructions follow the data distribution, that is, the reconstructions are guaranteed to be free of any artefacts.
More formally, let $\hatX = g(f(\X, U), U)$. We seek to
\begin{align}
    \text{minimize} \quad
    \mathbb{E}[d(\X, \hatX)]
    \quad \text{w.r.t.} \quad
    f \in \mathcal{F}_R, \, g \in \mathcal{G}
    \quad \text{s.t.} \quad 
    \hatX \sim \X.
\end{align}
For simplicity, we will only consider fixed-rate encoders
\begin{align}
    \mathcal{F}_R = \{ f: \mathbb{R}^n \times \mathbb{R} \rightarrow \{ 0, \dots, 2^{R} - 1 \} \}.
\end{align}
\cite{tschannen2018dist} also called this problem ``distribution-preserving lossy compression''. Their approach combined deterministic encoders with stochastic decoders. In the next section we demonstrate how also making the encoder stochastic can be advantageous in this setting.

\section{Circular uniform distribution}
Consider a uniform distribution over a unit circle. It will be convenient to use polar coordinates $(\theta, r)$ in addition to Cartesian coordinates $(x_1, x_2)$. In polar coordinates, we can write down the marginal density
\begin{align}
    p(\theta) = \frac{1}{2\pi}.
\end{align}
By our requirement of perfect perceptual quality, any reconstruction $\hatx$ will have to live on the unit circle as well, that is, $\hat r = r = 1$. Any distortion can therefore be expressed as a function of $\theta$ and $\hat\theta$ and we choose
\begin{align}
    d(\theta, \hat\theta)
    = 1 - \cos(\hat\theta - \theta)
    = 1 - \hatx^\top\x
    = \frac{1}{2} \|\hatx - \x\|^2,
\end{align}
which is the cosine distance or equivalently the mean squared error. We further target a bit rate of 1, that is, $f(\x, u) \in \{0, 1\}$.

We will now compare two approaches for this example, vector quantization (which uses a deterministic encoder) and universal quantization \citep{ziv1985universal} (which uses a stochastic encoder).

\subsection{Vector quantization}
\label{sec:vq}
Choose $\theta_0 = 0$ and $\theta_1 = \pi$.  Let $f \in \mathcal{\tilde F}_1$ be the encoder with
\begin{align}
    f(\x, u) = \underset{k \in \{0, 1\}}{\text{argmin}}\, d(\theta, \theta_k)
\end{align}
for any input $\x$ with polar coordinates $(\theta, 1)$. To satisfy the perceptual quality constraint, we decode by uniformly sampling from the Voronoi section corresponding to $k$,
\begin{align}
    g(k, u) = \begin{pmatrix}\cos\hat\theta \\ \sin\hat\theta\end{pmatrix} 
    \quad\text{where}\quad
    \hat\theta = \theta_k + \pi ( u - \frac{1}{2} ).
\end{align}
For a fixed $\theta \in [-\pi/2, \pi/2)$, the error is
\begin{align}
    \mathbb{E}[d(\theta, \hat\Theta) \mid \theta]
    = 1 - \int_{\pi/2}^{-\pi/2} \frac{1}{\pi} \cos(\theta - \hat \theta) \, d\hat \theta
    = 1 - \frac{2}{\pi} \cos(\theta),
\end{align}
and similarly for other angles. The expected distortion is
\begin{align}
    \mathbb{E}[d(\Theta, \hat\Theta)] = 1 - \frac{4}{\pi^2}.
\end{align}
Appendix~C shows that this is the best performance we can hope for with a deterministic encoder.

\begin{figure}
    \centering
    \begin{tikzpicture}
        \node at (0, \textwidth/2.7) {\textbf{\textsf{A}}};
        \begin{axis}[
                width=\textwidth/2,
                height=\textwidth/2,
                xmin=-1.3,
                xmax=1.3,
                ymin=-1.3,
                ymax=1.3,
                ticks=none,
                axis lines=middle
            ]
            \draw[vqcolor, line width=3pt] (axis cs:1,0) arc [radius=1, start angle=0, end angle=180];
            \draw[black, line width=1pt] (axis cs:1,0) arc [radius=1, start angle=0, end angle=-180];
            
            \node[fill, black, circle, inner sep=2.5pt, label=20:{$\mathbf{x}$}] (x) at (axis cs:0.939693, 0.34202) {};

            \node[draw, vqcolor, circle, line width=1pt, inner sep=2.5pt, label=0:{$\mathbf{c}_1$}] at (axis cs:0, 0.63662) {};
            \node[draw, black, circle, inner sep=2.5pt, label=0:{$\mathbf{c}_2$}] at (axis cs:0, -0.63662) {};
        \end{axis}

        \node at (\textwidth/2, \textwidth/2.7) {\textbf{\textsf{B}}};
        \begin{axis}[
                xshift=\textwidth/2,
                width=\textwidth/2,
                height=\textwidth/2,
                xmin=-1.3,
                xmax=1.3,
                ymin=-1.3,
                ymax=1.3,
                ticks=none,
                axis lines=middle
            ]
            \draw[uqcolor, line width=3pt] (axis cs:0.34202, -0.939693) arc [radius=1, start angle=-70, end angle=110];
            \draw[black, line width=1pt] (axis cs:0.34202, -0.939693) arc [radius=1, start angle=-70, end angle=-250];
            
            \node[fill, black, circle, inner sep=2.5pt, label=20:{$\mathbf{x}$}] (x) at (axis cs:0.939693, 0.34202) {};
    
            \draw[black, line width=0.5pt] (axis cs:0.6,0) arc [radius=0.6, start angle=0, end angle=20];
            \draw (axis cs:0, 0) -- (x);
            \node at (axis cs:0.492404, 0.0868241) {$\theta$};
        \end{axis}
    \end{tikzpicture}
    \caption{\textbf{A}: Visualization of deterministic 1-bit encoding of data drawn uniformly from the circle. Vector quantization maps data points of a Voronoi region (orange) to a codebook vector ($\mathbf{c}_1$). Alternatively, a stochastic decoder may sample uniformly from the Voronoi region to achieve \textit{perfect perceptual quality}. Depending on the location of the data point, the decoder will sample from one of two regions. \textbf{B}: Universal quantization stochastically encodes data such that the reconstruction distribution (blue) is centered around the data point.}
    \label{fig:circle}
\end{figure}
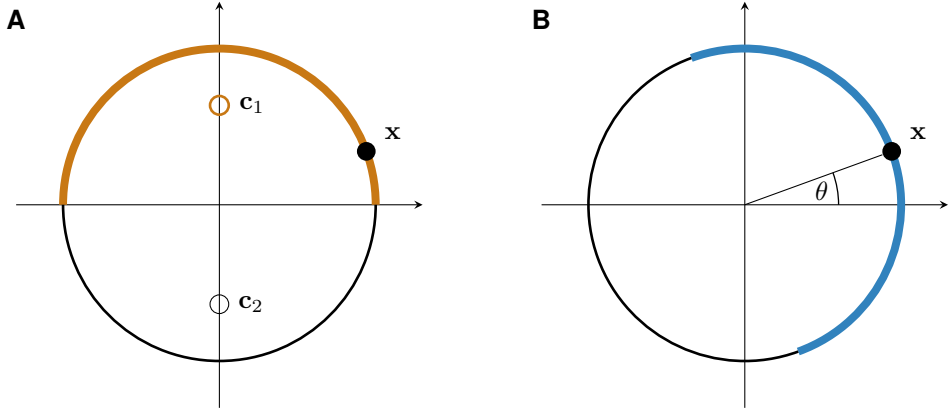

\subsection{Universal quantization}
\label{sec:uq}
Universal quantization \citep{ziv1985universal} is a simple trick which allows us to communicate a uniform sample using bits.
Given $y \in \mathbb{R}$ and a random variable $U$ uniformly distributed over any unit interval, the encoder computes
\begin{align}
    K = \lfloor y + U \rceil.
\end{align}
The decoder receives $K$ and computes the reconstruction
\begin{align}
    \hat Y = K - U.
\end{align}
It can be shown that $\hat Y \sim y + U'$ where $U'$ is uniformly distributed between $-0.5$ and $0.5$. Note that universal quantization requires that the encoder and decoder both have access to $U$. We will adapt this trick to communicate a noisy version of the input angle $\theta$. For the encoder, we choose
\begin{align}
    f(\x, u) = \left\lfloor \frac{\theta}{\pi} + u \right\rceil \mathrm{mod}\ 2.
\end{align}
The decoder receives $k \in \{0, 1\}$ and computes
\begin{align}
    g(k, u) = \begin{pmatrix}\cos\hat\theta \\ \sin\hat\theta\end{pmatrix} 
    \quad\text{where}\quad
    \hat\theta = \pi \left( k - u \right).
\end{align}
The modulo operation is only changing the decoder output by a multiple of 2.
Hence, for a fixed input with angle $\theta$, the reconstruction is distributed as follows:
\begin{align}
    \hat\Theta
    = \pi(k - U)
    = \pi\left( \left\lfloor \frac{\theta}{\pi} + U \right\rceil - 2J - U \right)
    \sim \pi\left( \frac{\theta}{\pi} + U' - 2J \right)
    = \theta + \pi U' - 2\pi J
\end{align}
for some integer valued random variable $J$ which depends on $\theta$ and $U$. Note that we only care about $\theta$ up to multiples of $2\pi$ since $d(\theta, \hat\theta + 2\pi) = d(\theta, \hat\theta)$. Figure~\ref{fig:circle}C illustrates the conditional distribution of reconstructions when using universal quantization. Note that the distribution is centered around the data point, unlike for the deterministic encoder.

The expected distortion is independent of $\theta$ and we have
\begin{align}
    \mathbb{E}[d(\theta, \hat \Theta) \mid \theta]
    = 1 - \int_{\theta - \pi/2}^{\theta + \pi/2} \frac{1}{\pi} \cos(\theta - \hat\theta) \, d\hat \theta
    = 1 - \frac{2}{\pi} < 1 - \frac{4}{\pi^2}.
\end{align}
This distortion is about 38.9\% smaller than with a deterministic encoder. Appendix~D compares the two approaches further at other bit rates.

\section{Discussion}
Stochastic encoders can be much more convenient to optimize than deterministic encoders. 
It is thus important to understand the impact of noise on performance. When only optimizing for distortion, the potential for stochastic encoders to outperform deterministic encoders is limited (Appendix~A). In practice, stochasticity can even degrade the performance of neural encoders \citep{agustsson2020uq}.

On the other hand, here we showed that stochasticity can help significantly when taking into account perceptual quality (measured in terms of the reconstructions' distribution). This justifies using stochastic encoders beyond reasons of convenience and differentiability. In Appendix~B we further show that stochasticity is not enough but that \textit{shared} randomness is needed to achieve an improvement over deterministic encoders in our example. 

Our stochastic encoder relied on universal quantization. Universal quantization is of particular interest because it is easy to implement and uniform noise is commonly used in neural compression. The setting of perfect perceptual quality is also practically relevant since it corresponds to artefact free codecs.

However, here we only considered a simple example in a one-shot setting (that is, encoding a single data point). From rate-distortion theory we know that a deterministic encoder (vector quantization) asymptotically achieves minimal distortion if we encode many data points at once. An important question for future research is whether a similar achievability result holds in the setting of perfect perceptual quality or whether we should also expect noisy encoders to perform significantly better in a wide variety of realistic settings.

\subsubsection*{Acknowledgments}
The authors would like to thank Johannes Ballé for helpful discussions.

\bibliography{references}
\bibliographystyle{iclr2021_conference}

\section*{Appendix A}
Here we show that under mild assumptions deterministic encoders are optimal for a given rate-distortion trade-off. This generalizes a result by \cite{balle2021ntc} showing that a random offset used in dithering can always be replaced by a fixed offset which performs at least as well.

\begin{theorem}
    For every stochastic codec $(f, g) \in \mathcal{F} \times \mathcal{G}$ and trade-off parameter $\lambda \in [0, \infty)$, we can derive a deterministic codec which performs at least as well in terms of the expected rate-distortion trade-off.
\end{theorem}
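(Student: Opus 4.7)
The plan is a probabilistic-method derandomization of the encoder and decoder noise. For each realization $(u_1, u_2) \in [0,1]^2$ the functions $f_{u_1}(\x) := f(\x, u_1)$ and $g_{u_2}(k) := g(k, u_2)$ are deterministic, so it suffices to exhibit a single pair $(u_1^\ast, u_2^\ast)$ for which the deterministic codec $(f_{u_1^\ast}, g_{u_2^\ast})$ matches or beats the expected rate-distortion trade-off of $(f, g)$.

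First I would decompose the Lagrangian $\mathcal{L}_\lambda(f, g) = R(f) + \lambda D(f, g)$ into conditional contributions. The distortion part is immediate by Fubini, since $\X$, $U_1$, $U_2$ are independent: $D(f, g) = \mathbb{E}_{U_1, U_2}\bigl[\mathbb{E}_\X[d(\X, g_{U_2}(f_{U_1}(\X)))]\bigr]$. For the rate I would treat the two standard operationalisations together. If the rate is the expected length of an entropy code against a fixed prior $q$, i.e.\ $R(f) = \mathbb{E}[-\log q(f(\X, U_1))]$, then linearity again yields the same decomposition. If instead the rate is the marginal entropy $R(f) = H(K)$ of the transmitted symbol $K = f(\X, U_1)$, then \emph{conditioning reduces entropy} gives $\mathbb{E}_{U_1}[H(f_{U_1}(\X))] = H(K \mid U_1) \leq H(K)$, so the average rate of the derandomized codecs is still bounded by the original rate. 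Either way, $\mathcal{L}_\lambda(f, g) \geq \mathbb{E}_{U_1, U_2}[\mathcal{L}_\lambda(f_{U_1}, g_{U_2})]$, and the essential infimum of a function on $[0,1]^2$ is at most its expectation, so there must exist $(u_1^\ast, u_2^\ast)$ in the support with $\mathcal{L}_\lambda(f_{u_1^\ast}, g_{u_2^\ast}) \leq \mathcal{L}_\lambda(f, g)$. The shared-randomness setting of Figure~\ref{fig:graphs}C is handled identically with a single $u^\ast$.

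The main obstacle is conceptual rather than technical: the argument requires a notion of rate that does not \emph{increase} on average when conditioning on the dithering randomness. The two conventions above both satisfy this, which I take to be the ``mild assumption'' referred to in the statement; for a worst-case code length over $U_1$ or other pathological rate definitions the inequality can reverse. Beyond this, one only needs enough measurability on $f$ and $g$ for the Fubini step and the extraction of the good $(u_1^\ast, u_2^\ast)$ to be valid, which is automatic in any reasonable setup.
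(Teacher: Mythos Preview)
Your derandomization is correct and is essentially the paper's proof: express the Lagrangian as an expectation over the noise and extract a single realization $u^\ast$ achieving at most the mean. The paper works directly in the shared-randomness setting with an abstract per-instance cost $r(k,u)$, which sidesteps your case split; your separate handling of $R(f)=H(K)$ via $H(K\mid U_1)\le H(K)$ is a small bonus, since marginal entropy is not a pointwise cost and so is not literally covered by the paper's formulation.
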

\begin{proof}
    Let $f \in \mathcal{F}$ be a stochastic encoder and $g \in \mathcal{G}$ be a stochastic decoder. Let $U \sim U([0, 1))$ be a uniform random variable representing
    a shared source of randomness. Further, let 
    \begin{align}
        r: \mathbb{N}_0 \times \mathbb{R} \rightarrow [0, \infty], \quad (\mathbf{x}, u) \mapsto r(\mathbf{x}, u)
    \end{align}
    represent a number of bits, rate, or other coding cost which in general may depend on the state of the noise. Now consider the rate-distortion trade-off for a given trade-off parameter $\lambda \in [0, \infty)$,
    \begin{align}
        \label{eq:tradeoff}
        \ell_\lambda(\mathbf{x}, u) = r(f(\mathbf{x}, u), u) + \lambda d(\mathbf{x}, g(f(\mathbf{x}, u), u)).
    \end{align}
    Under mild conditions on $\ell_\lambda$ (see Tonelli's theorem), we can write for the expected trade-off
    \begin{align}
        l_\lambda = \mathbb{E}[\ell_\lambda(\mathbf{X}, U)] = \mathbb{E}_U[\mathbb{E}_\mathbf{X}[\ell_\lambda(\mathbf{X}, U)]].
    \end{align}
    We know there must exist a $u^* \in [0, 1]$ such that $\mathbb{E}_\X[\ell_\lambda(\mathbf{X}, u^*)] \leq l_\lambda$, otherwise the expected trade-off would be larger. We therefore have
    \begin{align}
        \mathbb{E}_U[\mathbb{E}_\mathbf{X}[\ell_\lambda(\mathbf{X}, U)]] 
        &\geq \mathbb{E}_\mathbf{X}[\ell_\lambda(\mathbf{X}, u^*)] \\
        &= \mathbb{E}_\mathbf{X}[\tilde r(\tilde f(\mathbf{X})) + \lambda d(\mathbf{X}, \tilde g(\tilde f(\mathbf{X})))],
        \label{eq:lagrangian}
    \end{align}
    where
    \begin{align}
        \tilde f(\mathbf{x}) = f(\mathbf{x}, u^*)
    \end{align}
    is a deterministic encoder and $\tilde g$ and $\tilde r$ are defined analogously. That is, for every stochastic codec we can derive a deterministic codec which performs at least as well in terms of the expected rate-distortion trade-off.
\end{proof}

Interestingly, the same cannot be said if we consider the constrained optimization problem to
\begin{align}
    \text{minimize}_{f, g} \quad \mathbb{E}[r(f(\mathbf{X}, U), U)]
    \quad \text{s.t.} \quad \mathbb{E}[d(\mathbf{X}, g(f(\mathbf{X}, U), U))] < D.
\end{align}
Here, stochastic encoders may already perform better as \cite{wagner2021sawbridge} showed in an example. This can happen when the Pareto front of deterministic encoders is non-convex such that its Lagrangian (that is, the trade-off in Eq.~\ref{eq:lagrangian}) can only address a subset of the constrained optimization problems. However, \cite{wagner2021sawbridge} also showed that in these cases a simple randomized interpolation of two deterministic codecs can achieve optimal performance.

\section*{Appendix B}

The following theorem shows that if the encoder and decoder do not have a access to a shared source of randomness (Fig.~\ref{fig:graphs}C), then requiring perfect perceptual quality necessarily leads to a twofold increase in squared error relative to a codec which only minimizes distortion.

\begin{theorem}
    \label{th:double}
    Let $\mathcal{H}_R \subseteq \mathcal{F}_R \times \mathcal{G}$ be the subset of encoders $f$ and decoders $g$ such that $\hatX \sim \X$ where
    \begin{align}
        U_1, U_2 &\sim U([0, 1)), \\
        K &= f(\X, U_1), \\
        \hatX &= g(K, U_2),
    \end{align}
    and $U_1$ and $U_2$ are assumed to be independent. Then
    \begin{align}
        \inf_{f, g \in \mathcal{H}_R} \mathbb{E}[\|\X - \hatX\|^2]
        \geq 2 \inf_{f \in \mathcal{F}_R} \mathbb{E}[\|\X - \mathbb{E}[\X \mid K]\|^2].
    \end{align}
\end{theorem}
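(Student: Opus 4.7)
The plan is to decompose $\mathbb{E}[\|\X - \hatX\|^2]$ around the conditional mean $m_K := \mathbb{E}[\X \mid K]$ and to bound each of the resulting pieces below by
\[
    D^\star := \inf_{f \in \mathcal{F}_R} \mathbb{E}\!\left[\|\X - \mathbb{E}[\X \mid f(\X, U_1)]\|^2\right].
\]
First I would observe that $U_1 \perp U_2$ together with the graphical structure $\X \to K \to \hatX$ implies the conditional independence $\X \perp \hatX \mid K$. Conditioning on $K$ and using $\mathbb{E}[\X - m_K \mid K] = 0$ together with the conditional factorization, the cross term $\mathbb{E}[(\X - m_K)^\top(m_K - \hatX)]$ vanishes, yielding the Pythagorean identity
\[
    \mathbb{E}[\|\X - \hatX\|^2] = \mathbb{E}[\|\X - m_K\|^2] + \mathbb{E}[\|m_K - \hatX\|^2].
\]

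The first summand is $\geq D^\star$ because the current $f$ is admissible in the infimum defining $D^\star$. The substantive work is in the second summand. Using the $L^2$-optimality of the conditional expectation $\hat m_K := \mathbb{E}[\hatX \mid K]$, I would first weaken
\[
    \mathbb{E}[\|m_K - \hatX\|^2] \geq \mathbb{E}[\|\hat m_K - \hatX\|^2].
\]

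The key step is then to recognize the right-hand side as the MMSE of a valid rate-$R$ encoder applied to $\hatX$. Because the conditional law $P(K \mid \hatX)$ is discrete on $\{0, \ldots, 2^R - 1\}$, a standard inverse-CDF sampling produces a measurable $\tilde f \in \mathcal{F}_R$ such that $\tilde K := \tilde f(\hatX, V)$ satisfies $(\hatX, \tilde K) \stackrel{d}{=} (\hatX, K)$, for $V \sim U([0,1))$ auxiliary. Then $\mathbb{E}[\hatX \mid \tilde K]$ agrees in distribution with $\hat m_K$, so the MMSE of $\hatX$ under $\tilde f$ equals $\mathbb{E}[\|\hatX - \hat m_K\|^2]$. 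Since $\hatX \sim \X$ by the perfect perceptual constraint, this quantity also equals the MMSE of $\X$ under the same encoder $\tilde f$, and is therefore at least $D^\star$.

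Combining the three inequalities gives $\mathbb{E}[\|\X - \hatX\|^2] \geq 2 D^\star$ for every $(f, g) \in \mathcal{H}_R$, and the theorem follows by taking the infimum on the left. The main subtlety is that one cannot hope for the per-codec inequality $\mathbb{E}[\|m_K - \hatX\|^2] \geq \mathbb{E}[\|\X - m_K\|^2]$ to hold for the given $f$ --- it fails when $f$ is far from MMSE-optimal --- so the re-encoding of $\hatX$ is essential: it is precisely the step where the marginal equality $\hatX \sim \X$ is used to route the bound through the global infimum $D^\star$ rather than the codec-specific $\mathbb{E}[\|\X - m_K\|^2]$.
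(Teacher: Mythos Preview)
Your proof is correct and follows essentially the same route as the paper: the Pythagorean split around $\mathbb{E}[\X\mid K]$ using the conditional independence $\X\perp\hatX\mid K$, the replacement of $m_K$ by $\hat m_K$ via $L^2$-optimality of the conditional mean, and the identification of the resulting term with a rate-$R$ MMSE for $\X$ through the marginal constraint $\hatX\sim\X$. The only difference is cosmetic: where the paper argues that the two infima coincide because they range over the same set of joint laws on $(\X,K)$, you make this explicit by building $\tilde f$ via inverse-CDF sampling of $P(K\mid\hatX)$, which is the same statement in constructive form.
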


\begin{proof}
    We can divide the error into two components,
    \begin{align}
        &\mathbb{E}[\| \X - \hatX \|^2] \\
        =\ &\mathbb{E}[\| \X - \mathbb{E}[\X \mid K] + \mathbb{E}[\X \mid K] - \hatX \|^2] \\
        =\ &\mathbb{E}[\| \X - \mathbb{E}[\X \mid K] \|^2 + \| \mathbb{E}[\X \mid K] - \hatX \|^2
        + 2(X - \mathbb{E}[\X \mid K])^\top (\mathbb{E}[\X \mid K] - \hatX)] \\
        =\ &\mathbb{E}[\| \X - \mathbb{E}[\X \mid K] \|^2] + \mathbb{E}[\| \mathbb{E}[\X \mid K] - \hatX \|^2],
    \end{align}
    since
    \begin{align}
        &\mathbb{E}[(\X - \mathbb{E}[\X \mid K])^\top (\mathbb{E}[\X \mid K] - \hatX)] \\
        =\ &\mathbb{E}_{K}[\mathbb{E}_{\X, \hatX}[(\X - \mathbb{E}[\X \mid K])^\top (\mathbb{E}[\X \mid K] - \hatX) \mid K]] \\
        =\ &\mathbb{E}_K[\mathbb{E}_\X[\X - \mathbb{E}[\X \mid K] \mid K]^\top \mathbb{E}_{\hatX}[\mathbb{E}[\X \mid K] - \hatX \mid K]] \\
        =\ &\mathbb{E}_K[(\mathbb{E}[\X \mid K] - \mathbb{E}[\X \mid K])^\top (\mathbb{E}[\X \mid K] - \mathbb{E}[\hatX \mid K])] \\
        =\ &0.
    \end{align}
    The second equality is true because $\X$ and $\hatX$ are independent conditioned on $K$.
    It follows that
    \begin{align}
        &\inf_{f, g \in \mathcal{H}_R} \mathbb{E}[\| \X - \hatX \|^2] \\
        =\ &\inf_{f, g \in \mathcal{H}_R} \mathbb{E}[\| \X - \mathbb{E}[\X \mid K] \|^2] + \mathbb{E}[\| \mathbb{E}[\X \mid K] - \hatX \|^2] \\
        \geq\ &\inf_{f, g \in \mathcal{H}_R} \mathbb{E}[\| \X - \mathbb{E}[\X \mid K] \|^2] +
        \inf_{f, g \in \mathcal{H}_R} \mathbb{E}[\| \mathbb{E}[\X \mid K] - \hatX \|^2] \\
        \label{eq:twoterms}
        \geq\ &\inf_{f, g \in \mathcal{H}_R} \mathbb{E}[\| \X - \mathbb{E}[\X \mid K] \|^2] +
        \inf_{f, g \in \mathcal{H}_R} \mathbb{E}[\| \mathbb{E}[\hatX \mid K] - \hatX \|^2] \\
        =\ &2\inf_{f \in \mathcal{F}_R} \mathbb{E}[\| \X - \mathbb{E}[\X \mid K] \|^2].
    \end{align}
    Here, the second inequality is true because the conditional expectation minimizes the squared error. To understand the last equality, note that both terms in Eq.~\ref{eq:twoterms} represent the same minimization problem over the same set of joint distributions over $\X$ and $K$ (or $\smash{\hatX}$ and $K$). In the first minimization problem we fix $p(\x)$ and chose $p(k \mid \x)$. In the second minimization problem we equivalently choose both $p(k)$ and $p(\hatx \mid k)$ subject to a constraint on the marginal distribution of $p(\hatx)$.
\end{proof}

A result of \citet[Theorem 2]{blau2019rethinking} could be adapted to prove the achievability of this bound (we assume a fixed rate here where they constrain the mutual information). While the minimum squared error increases at most by a factor of two by adding a \textit{perfect perceptual quality} constraint \citep{blau2019rethinking}, our result shows that we need a shared source of randomness if we want the error to be any lower.

\section*{Appendix C}
Here we show that the deterministic encoder described in Section~\ref{sec:vq} is optimal among deterministic encoders. 
Let $f \in \mathcal{F}_1$ be any stochastic 1-bit encoder.
Futher, let $K = f(\X, U)$. Given $K$, a decoder $g \in \mathcal{G}$ which minimizes the squared error is given by
\begin{align}
    g(k, u) = \cb_k, \quad\text{where}\quad \cb_k = \mathbb{E}[\X \mid K].
\end{align}
Then the following deterministic encoder $f^* \in \mathcal{F}_1$ performs at least as well,
\begin{align}
    \label{eq:fstar}
    f^*(\x, u) = \underset{k \in \{0, 1\}}{\text{argmin}}\, \|\x - \cb_k\|^2.
\end{align}
In other words, when searching for an encoder which minimizes squared error we only need to consider deterministic encoders of the form of Equation~\ref{eq:fstar}.

The codebook vectors $\cb_0$ and $\cb_1$ create two Voronoi regions in $\mathbb{R}^2$ which split the circle into two sections. The squared error corresponding to a uniform distribution over a section of length $t$ can be calculated to be
\begin{align}
    \mathbb{E}[\|\X - \mathbf{c}^*\|^2 \mid \theta \in [a, a + t)] = \frac{t^2 + 2\cos(t) - 2}{t^2}
\end{align}
where
\begin{align}
    \mathbf{c}^* = \mathbb{E}[\X \mid \theta \in [a, a + t)].
\end{align}
Assume the lengths of the Voronoi sections corresponding to $\cb_0$ and $\cb_1$ are $t$ and $2\pi - t$, respectively. Then the average error is given by
\begin{align}
    \mathbb{E}[\|\X - \mathbb{E}[\X \mid K]\|^2]
    &= \frac{t}{2\pi} \frac{t^2 + 2\cos(t) - 2}{t^2} +
    \frac{2\pi - t}{2\pi} \frac{(2\pi - t)^2 + 2\cos(2\pi - t) - 2}{(2\pi - t)^2} \\
    &= \frac{2 + 2\pi t - t^2 + 2\cos(t)}{2\pi t - t^2}.
\end{align}
This function is minimized at $t = \pi$ where
\begin{align}
    \mathbb{E}[\|\X - \mathbb{E}[\X \mid K]\|^2] = 1 - \frac{4}{\pi^2}.
    \end{align}
Using Theorem~\ref{th:double} (Appendix~B), we conclude
\begin{align}
    \mathbb{E}[d(\Theta, \hat \Theta)]
    = \frac{1}{2} \mathbb{E}[\|\X - \hatX\|^2]
    \geq \inf_{f \in \mathcal{F}_1} \mathbb{E}[\|\X - \mathbb{E}[\X \mid K]\|^2]
    = 1 - \frac{4}{\pi^2}.
\end{align}
That is, there is no deterministic encoder which performs better than the one described in Section~\ref{sec:vq}.

\newpage

\section*{Appendix D}

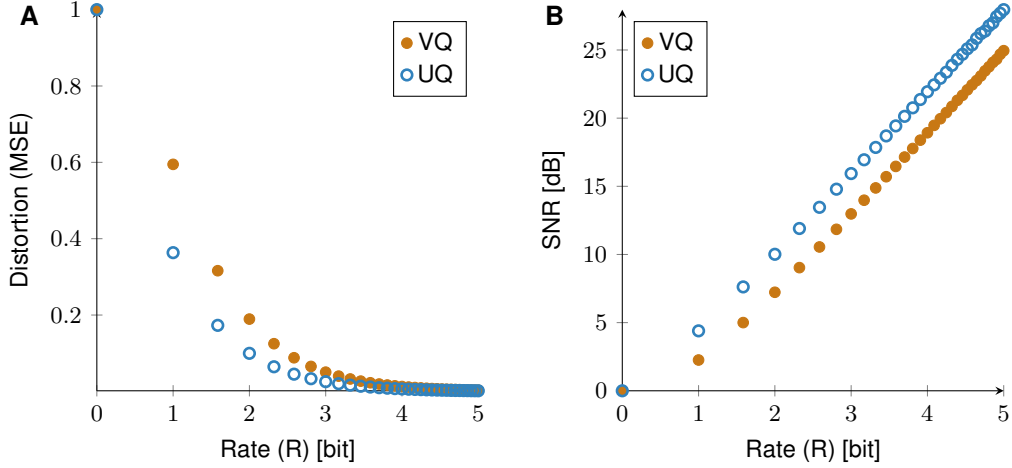
\begin{figure}[h!]
    \centering
    \begin{tikzpicture}
        \node at (-0.9, \textwidth/2.8) {\textbf{\textsf{A}}};
        \begin{axis}[
                width=\textwidth/2.1,
                height=\textwidth/2.1,
                domain=0:32,
                samples=33,
                ylabel={Distortion (MSE)},
                xlabel={Rate (R) [bit]},
                legend entries={VQ, UQ},
                legend pos={north east},
                axis lines=left,
            ]
            
            \addplot[vqcolor, only marks]
            table[
                x expr=log2(\thisrow{n}),
                y expr=\thisrow{vq},
            ] {rdcurve.dat};
            
            \addplot[uqcolor, only marks, mark=o, line width=1pt]
            table[
                x expr=log2(\thisrow{n}),
                y expr=\thisrow{uq},
            ] {rdcurve.dat};
        \end{axis}

        \begin{scope}[xshift=\textwidth/2]
            \node at (-0.9, \textwidth/2.8) {\textbf{\textsf{B}}};
            \begin{axis}[
                    width=\textwidth/2.1,
                    height=\textwidth/2.1,
                    domain=0:32,
                    samples=33,
                    ylabel={SNR [dB]},
                    xlabel={Rate (R) [bit]},
                    legend entries={VQ, UQ},
                    legend pos={north west},
                    axis lines=left,
                ]
    
                \addplot[vqcolor, only marks]
                table[
                    x expr=log2(\thisrow{n}),
                    y expr=-10 * log10(\thisrow{vq}),
                ] {rdcurve.dat};
                
                \addplot[uqcolor, only marks, mark=o, line width=1pt]
                table[
                    x expr=log2(\thisrow{n}),
                    y expr=-10 * log10(\thisrow{uq}),
                ] {rdcurve.dat};
            \end{axis}
        \end{scope}
    \end{tikzpicture}
    \caption{Comparison of a deterministic (VQ) and a stochastic fixed-rate encoder on the circular uniform distribution. \textbf{A}: The mean-squared error as a function of the bit rate. \textbf{B}: The signal-to-noise ratio as a function of the bit rate.}
    \label{fig:rdcurve}
\end{figure}

We showed that the optimal deterministic encoder at target bit rate of 1 is a vector quantizer (VQ) followed by stochastic decoding (Appendix~C). Without proof, here we illustrate the performance of this approach and compare it to universal quantization (UQ) at other rates. If $R = \log_2 N$ is the target bit rate, then we have for the distortion:
\begin{align}
    \mathbb{E}_\text{VQ}[d(\Theta, \hat\Theta)] &= 1 - \left(\frac{N \sin(\pi/N)}{\pi}\right)^2, \\
    \mathbb{E}_\text{UQ}[d(\Theta, \hat\Theta)] &= 1 - \frac{N \sin(\pi/N)}{\pi}.
\end{align}
Figure~\ref{fig:rdcurve} visualizes both the average distortion $D$ and the signal-to-noise ratio $-10 \log_{10} D$ (SNR). As we increase the bit rate, the distortion converges to zero in both cases. On the other hand, the gap between deterministic and stochastic encoders even increases if measured as a difference in SNR.

\end{document}